\documentclass[11pt]{article}
\oddsidemargin -.5 true cm
\evensidemargin -.5 true cm
\marginparwidth 0.75 in
\textwidth 6.5 true in
\textheight 22 true cm
\topmargin -.5 true cm

\usepackage{amssymb,amsmath,amsthm}
\usepackage{epsfig}
\usepackage{url}
\PassOptionsToPackage{hyphens}{url}\usepackage{hyperref} 

\makeatletter
\let\@@citation@@=\citation
\renewcommand{\citation}[1]{\@@citation@@{#1}%
\@for\@tempa:=#1\do{\@ifundefined{cit@\@tempa}%
  {\global\@namedef{cit@\@tempa}{}}{}}%
}
\makeatother

\usepackage{hyperref}
\hypersetup{pdfpagemode=UseNone} 

\makeatletter
\def\@lbibitem[#1]#2#3\par{%
  \@ifundefined{cit@#2}{}{\@skiphyperreftrue
  \H@item[%
    \ifx\Hy@raisedlink\@empty
      \hyper@anchorstart{cite.#2\@extra@b@citeb}%
        \@BIBLABEL{#1}%
      \hyper@anchorend
    \else
      \Hy@raisedlink{%
        \hyper@anchorstart{cite.#2\@extra@b@citeb}\hyper@anchorend
      }%
      \@BIBLABEL{#1}%
    \fi
    \hfill
  ]%
  \@skiphyperreffalse}%
  \if@filesw
    \begingroup
      \let\protect\noexpand
      \immediate\write\@auxout{%
        \string\bibcite{#2}{#1}%
      }%
    \endgroup
  \fi
  \ignorespaces
  \@ifundefined{cit@#2}{}{#3}}

\def\@bibitem#1#2\par{%
  \@ifundefined{cit@#1}{}{\@skiphyperreftrue\H@item\@skiphyperreffalse
  \Hy@raisedlink{%
    \hyper@anchorstart{cite.#1\@extra@b@citeb}\relax\hyper@anchorend
    }}%
  \if@filesw
    \begingroup
      \let\protect\noexpand
      \immediate\write\@auxout{%
        \string\bibcite{#1}{\the\value{\@listctr}}%
      }%
    \endgroup
  \fi
  \ignorespaces
  \@ifundefined{cit@#1}{}{#2}}
\makeatother

\newtheorem{thm}{Theorem}
\newtheorem{cor}[thm]{Corollary}
\newtheorem{lem}[thm]{Lemma}

\usepackage{indentfirst}
\frenchspacing
\usepackage{xspace} 

\hfuzz2pt

\def\N{\mbox{\ensuremath{\mathbb N}}\xspace}

\def\Q{\mbox{\ensuremath{\mathbb Q}}\xspace}
\def\Z{\mbox{\ensuremath{\mathbb Z}}\xspace}


\mathcode`l="8000
\begingroup
\makeatletter
\lccode`\~=`\l
\DeclareMathSymbol{\lsb@l}{\mathalpha}{letters}{`l}
\lowercase{\gdef~{\ifnum\the\mathgroup=\m@ne \ell \else \lsb@l \fi}}%
\endgroup


\usepackage{authblk}

\begin{document}

\title{The range of non-linear natural polynomials cannot be context-free}
\author{D\"om\"ot\"or P\'alv\"olgyi
}
\affil{MTA-ELTE Lend\"ulet Combinatorial Geometry Research Group,\\
	Institute of Mathematics, E\"otv\"os Lor\'and University, Budapest, Hungary}
\maketitle

\begin{abstract}
	Suppose that some polynomial $f$ with rational coefficients takes only natural values at natural numbers, i.e., $L=\{f(n)\mid n\in \N\}\subseteq\N$.
	We show that the base-$q$ representation of $L$ is a context-free language if and only if $f$ is linear, answering a question of Shallit.
	The proof is based on a new criterion for context-freeness, which is a combination of the Interchange lemma and a generalization of the Pumping lemma.
\end{abstract}

\medskip

Call a polynomial $f$ over \Q \emph{natural} if $f(n)\in \N$ for every $n\in \N$.
For example, $\frac{x^2+x}2$ is natural.
Shallit \cite[Reseach problem 3 in Section 4.11, page 138]{Shallit} proposed to study whether the base-$q$ representation of the range, $L=\{f(n)\mid n\in \N\}$, of a natural polynomial is context-free or not.
It is easy to see that if $f$ is linear, i.e., its degree is at most one, then $L$ is context-free for any $q$.
It was conjectured that $L$ is not context-free for any other $f$.
This conjecture was known to hold only in special cases, though S\'andor Horv\'ath had an unpublished manuscript that claimed a solution.\footnote{According to Shallit, see \url{https://cstheory.stackexchange.com/a/41864/419.}.}
The goal of this note is to present a simple proof that uses a new lemma, which is a simple combination of two well-known necessary criteria for the context-freeness of a language.\\

A \emph{context-free grammar} $G$ is defined as a finite $4$-tuple $G=(V,\Sigma,P,S)$, where $V$ is the set of \emph{non-terminal symbols}, $\Sigma$ is the set of the \emph{terminal symbols}, which we also call the letters of the \emph{alphabet} (where $V\cap \Sigma=\emptyset$), $P$ is the set of \emph{production rules} and $S\in V$ is the \emph{start symbol}.
Each production rule is of the form $A\to \alpha$ where $A\in V$ and $\alpha \in (V\cup \Sigma)^*$ is a \emph{string}.
When such a rule is applied to an occurrence of $A$ in some string $\beta$, that occurrence of the symbol $A$ is replaced with $\alpha$ in $\beta$ to obtain a new string.
We say that a string $\gamma\in (V\cup \Sigma)^*$ can be \emph{derived} from another string $\beta \in (V\cup \Sigma)^*$ if after applying some rules to certain occurrences of the appropriate non-terminals starting from $\beta$ we can obtain $\gamma$.
The \emph{language} $L(G)$ of the grammar $G$ is the set of words from $\Sigma^*$ that can be derived from $S$.
A derivation of a word $z\in L(G)$ from $S$ can be described by a \emph{derivation tree}; this is a rooted ordered tree whose non-leaf nodes are labeled with non-terminal symbols such that the root is labeled with $S$, the labels of the children of any node labeled $A$ are the right side of some rule $A\to \alpha$ in the given order, and leaves are labeled with terminal symbols that give $z$ in the given order. 
A grammar is in \emph{Chomsky normal form} if the right side of each production rule is either two non-terminal symbols, or one terminal symbol, or the empty string; every context-free grammar has a Chomsky normal form.
A language $L$ is context-free if $L=L(G)$ for some context-free grammar $G$.
For other basic definitions and statements about context-free grammars and languages, we direct the reader to \cite{Shallit}.\\

Now we state two lemmas that we later combine.\footnote{Note that we here we state them in a slightly weaker form as their original versions, as we do not use some parts of the original statements.}
The first is known as the Interchange Lemma.

\begin{lem}[Interchange Lemma \cite{ORW82}]\label{interchange}
	For every context-free language $L$ there is a constant $p>0$ such that for all $n\in \N$ for any collection of length $n$ words $R\subset L$ there is a subset $Z=\{z_1,\ldots,z_k\}\subseteq R$ with $k\ge |R|/(pn^2)$, and decompositions $z_i=v_iw_ix_i$ such that each of $|v_i|$, $|w_i|$, and $|x_i|$ is independent of $i$, and the words $v_iw_jx_i$ are in $L$ for every $1\le i,j \le k$.
\end{lem}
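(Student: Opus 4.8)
The plan is to use a grammar $G=(V,\Sigma,P,S)$ in Chomsky normal form with $L=L(G)$, and to exploit that derivation trees of such a grammar are essentially binary: since $z$ is nonempty, every internal node of $T_z$ either has two children, both labeled by non-terminals, or has a single leaf child labeled by a terminal, and the \emph{yield} of a node — the factor of $z$ spelled out by the leaves below it — has length equal to the sum of the yield-lengths of its children. Put $m=|V|$, and for each word $z\in R$ fix one derivation tree $T_z$. The finitely many instances with $n\le 2$ can be treated separately (there $|R|$ is bounded by a constant, so $k=1$ with the trivial decomposition $v_1=x_1=\varepsilon$, $w_1=z_1$ works once $p$ is taken large enough in terms of $G$), so assume $n\ge 3$.

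First I would locate in each $T_z$ an internal node whose yield has ``medium'' length. Starting at the root, whose yield is all of $z$ and thus has length $n$, repeatedly step to the child with the longer yield, breaking ties arbitrarily. This produces a strictly decreasing sequence $n=a_0>a_1>\cdots>a_t=1$ of yield-lengths in which $a_{i+1}\ge a_i/2$, because the longer of two children carries at least half of its parent's yield. Hence the sequence cannot drop from a value $>2n/3$ directly to a value $\le n/3$, so some $a_i$ falls in $(n/3,\,2n/3]$; fix a node $u_z$ realizing such an $a_i$, let $A_z\in V$ be its label and $w_z$ its yield, and write $z=v_zw_zx_z$ accordingly, so that $n/3<|w_z|\le 2n/3$ and $u_z$ is not the root.

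Next comes a pigeonhole step. Assign to each $z\in R$ the \emph{type} $(A_z,\,|v_z|,\,|w_z|)$. The three coordinates range over at most $m$, $n+1$, and $n+1$ values respectively, so the number of types is at most $m(n+1)^2\le 4m\,n^2=p\,n^2$ with $p=4m$, a constant depending only on $G$. Therefore some single type is shared by a subset $Z=\{z_1,\dots,z_k\}\subseteq R$ with $k\ge |R|/(p\,n^2)$. For the words of $Z$ the lengths $|v_i|$ and $|w_i|$ coincide, hence so does $|x_i|=n-|v_i|-|w_i|$, and the distinguished nodes $u_{z_i}$ all carry one and the same non-terminal $A$.

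The interchange is now immediate: given $z_i,z_j\in Z$, take $T_{z_i}$ and replace the subtree rooted at $u_{z_i}$, which derives $w_i$ from $A$, by the subtree of $T_{z_j}$ rooted at $u_{z_j}$, which derives $w_j$ from the same non-terminal $A$; this is legal precisely because the two subtree-roots share the label $A$. The outcome is again a valid derivation tree with root $S$, and reading off its leaves gives the word $v_iw_jx_i$, which therefore belongs to $L$ for all $1\le i,j\le k$ (and has length $|v_i|+|w_j|+|x_i|=|v_i|+|w_i|+|x_i|=n$, as it should). The part I expect to need the most care is the descent argument producing a medium-length yield, together with the bookkeeping that fixes the constant $p$; once the types are matched, the substitution itself is routine, being exactly the mechanism behind the ordinary Pumping Lemma.
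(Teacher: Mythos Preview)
Your proof is correct. The paper does not actually prove this lemma --- it is quoted from \cite{ORW82} --- but your argument is the standard one and is exactly the mechanism the paper \emph{does} spell out in its proof of the Combined Lemma: fix a Chomsky normal form grammar, choose in each derivation tree a designated internal node, bucket the words of $R$ by the non-terminal at that node together with the relevant factor lengths, and then interchange subtrees within a large bucket. In the Combined Lemma the decomposition has five parts, so the bucketing costs $t\binom{n+4}{4}=O(n^4)$; in your three-part version the same idea costs $m(n+1)^2=O(n^2)$, matching the stated bound.

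One remark: your descent argument forcing $|w_z|\in(n/3,\,2n/3]$ is more than the paper's weakened statement requires --- the footnote explains that the length constraint on the middle piece, present in the original Interchange Lemma, has been dropped here --- and indeed as literally stated the lemma is trivial (take $v_i=x_i=\varepsilon$, $w_i=z_i$). So your extra care is not wasted: it recovers the nontrivial content of the original result at no real cost.
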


The second is the following generalization of the Pumping Lemma \cite{BH61}.

\begin{lem}[D\"om\"osi-Kudlek \cite{DK99}]\label{O1G}
	For every context-free language $L$ there is a constant $p$ such that if in a word $z\in L$ we distinguish $d$ positions and exclude $e$ positions such that $d\ge p(e+1)$, then there is a decomposition $z=uvwxy$ such that
	 $vx$ has a distinguished position, but no excluded positions and  $uv^iwx^iy\in L$ for every $i\ge 0$.
\end{lem}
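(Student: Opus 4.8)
The plan is to reduce the statement to the heavy‑path pigeonhole argument that underlies Ogden's lemma, but carried out \emph{inside a single, carefully chosen block of the word $z$}. Fix a grammar $G$ for $L$ in Chomsky normal form with $m$ non‑terminals, and let $p=2^{cm}$ for a suitable absolute constant $c$. Suppose $z\in L$ has a set $D$ of $d$ distinguished positions and a disjoint set $E$ of $e$ excluded positions with $d\ge p(e+1)$. The $e$ excluded positions cut the positions of $z$ into at most $e+1$ maximal runs of non‑excluded positions, so by pigeonhole one such run $I=[i_1,i_2]$ contains at least $d/(e+1)\ge p$ distinguished positions, and — this is the point of passing to $I$ — every excluded position lies strictly to the left of $i_1$ or strictly to the right of $i_2$. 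Consequently, if a decomposition $z=uvwxy$ has both pumped blocks $v$ and $x$ lying inside the block $I$, then $vx$ contains no excluded position automatically, and the whole task reduces to producing a \emph{pumpable pair of blocks inside $I$ that together cover a distinguished position}.

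Fix a derivation tree $T$ of $z$ in $G$; since $G$ is in Chomsky normal form, $T$ is binary and the yield of each node is an interval of positions. Let $\nu^{\ast}$ be the lowest common ancestor of the (at least $p$) distinguished positions contained in $I$, and descend from $\nu^{\ast}$, at each step passing to the child whose yield contains the most of these positions. Yields only shrink along the descent, so ``the current yield is $\subseteq I$'' is a monotone property; let $\hat\nu$ be the first node on the descent with yield inside $I$, and call the steps before it the \emph{escape segment}. Call a descent step a \emph{branch step} if both children carry a distinguished‑in‑$I$ position, so that the number of such positions is at least halved there, while at every other step this number is unchanged. Now split into two cases according to how many branch steps lie on the way from $\nu^{\ast}$ to $\hat\nu$. \emph{Case 1: at most $m$ branch steps.} Then at most $m$ halvings occur, so $\hat\nu$ still carries at least $p/2^{m}$ distinguished‑in‑$I$ positions — all inside $\mathrm{yield}(\hat\nu)\subseteq I$. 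Continuing the same descent from $\hat\nu$, which is now forced to pass through at least $\log_2(p/2^{m})$ branch steps (each halving a count that ends at $1$), and choosing $c$ large enough that $\log_2(p/2^m)>m$, among the nodes immediately below these branch steps two repeat a non‑terminal with a branch step between them; pumping that repeated non‑terminal gives $z=uvwxy$ with $v,x\subseteq\mathrm{yield}(\hat\nu)\subseteq I$, and with a distinguished position in $vx$ (the distinguished‑in‑$I$ position carried by the off‑path child at the intervening branch step, whose yield lies in $vx$), and $uv^iwx^iy\in L$ for all $i\ge 0$ since a repeated non‑terminal is pumpable. \emph{Case 2: more than $m$ branch steps.} Then the escape segment itself is long. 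Here one uses that the escape segment, after an initial stretch on which the yield sticks out of $I$ on both sides and which carries at most one branch step, becomes one‑sided (say it sticks out only to the left); on this one‑sided part the shape of the descent is rigid, and one extracts from it a repeated non‑terminal whose pumped blocks are confined to $I$ and again cover a distinguished position.

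The routine ingredients here are entirely standard: that a Chomsky‑normal‑form node yields an interval of positions, that a repeated non‑terminal along a root‑to‑leaf chain can be pumped, that a ``heavy‑path'' descent carrying $k$ marked positions passes through at least $\log_2 k$ branch steps, and the pigeonholes on the $m$ non‑terminals. The step I expect to be the main obstacle is the bookkeeping of Case 2 — precisely controlling the escape segment: showing it really splits into a two‑sided phase followed by a one‑sided phase, that no distinguished‑in‑$I$ position is lost on the two‑sided phase except possibly at one branch step, and that on the one‑sided phase the nodes are arranged so that a repeated non‑terminal can be pumped without any of the pumped blocks leaving $I$. This is exactly where the constant $p$ is spent, and it is the content of the Dömösi–Kudlek analysis; for these remaining details I would follow \cite{DK99}.
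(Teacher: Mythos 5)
Your Case 1 is fine, but the proposal is not a proof: the whole difficulty of the statement sits in your Case 2, which you only gesture at ("the shape of the descent is rigid") and then explicitly defer to \cite{DK99}. Moreover, the sketched route for Case 2 does not obviously close. In the one-sided phase the heavy descent keeps the left child at every branch step, so any repeated non-terminal $A$ along that part of the path pumps a pair $(v,x)$ in which $x$ lies inside the run $I$ (good), but $v$ is exactly the part of the upper node's yield strictly to the left of the lower node's yield, i.e.\ it lies entirely to the left of $I$. Your reduction to a single maximal run of non-excluded positions only tells you that the one position immediately left of $I$ is excluded; positions further left may or may not be excluded, so $v$ can perfectly well contain excluded positions, violating the conclusion. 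The only way to force $v=\varepsilon$ is to take the two occurrences of $A$ between consecutive right-turns of the path (so the yields share their left endpoint), but the number of such stretches is controlled neither by $e$ nor by the grammar size, so the pigeonhole on the $m$ non-terminals does not go through within a stretch. In short, the opening move of restricting attention to one run between excluded positions discards exactly the information about where the other excluded positions sit, and that information is what the linear trade-off $d\ge p(e+1)$ requires.

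For comparison: the paper does not reprove Lemma \ref{O1G} (it is quoted from \cite{DK99}), but the proof it gives of Lemma \ref{combined} contains the relevant mechanism, and it is different from yours. There one keeps all excluded positions in play and counts globally: there are exactly $e-1$ ``e-branch'' nodes (both children have an excluded descendant) and $d-1$ ``d-branch'' nodes, the d-branch nodes form a binary tree under the d-parent relation, and if $d\ge p(e+1)$ with $p>2^{2t+3}$ there is a d-branch node with no e-branch node among its $2t+3$ closest d-ancestors. Along that d-ancestor path the excluded descendants can fall off to an off-path sibling at most once (otherwise an e-branch node would occur on the path), which yields a subpath with $t+1$ d-branch nodes all of whose off-path siblings are free of excluded positions; the pigeonhole on non-terminals is then applied there. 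That global count of e-branch nodes is precisely what buys the linear dependence on $e$ and is the idea missing from your Case 2; if you want to complete your write-up without simply citing \cite{DK99}, replacing the single-run reduction by this e-branch/d-branch bookkeeping is the natural fix.
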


A straight-forward combination of the proofs of Lemmas \ref{interchange} and \ref{O1G} gives the following.

\begin{lem}[Combined lemma]\label{combined}
	For every context-free language $L$ there is a constant $p>0$ such that for all $n$ for any collection of length $n$ words $R\subseteq L$, if we distinguish $d$ positions and exclude $e$ positions such that $d\ge p(e+1)$, then there is a $Z=\{z_1,\ldots,z_k\}\subset R$ with $k\ge |R|/(pn^4)$, and a decomposition $z_i=u_iv_iw_ix_iy_i$ for every $1\le i\le k$ such that
		\begin{itemize}
		\item $|u_i|$, $|v_i|$, $|w_i|$, $|x_i|$, and $|y_i|$ are all independent of $i$,	
		\item $v_ix_i$ has a distinguished position, but no excluded positions,
		\item $u_{i_0}v_{i_1}\ldots v_{i_m}w_{i_{m+1}}x_{i_m}\ldots x_{i_1}y_{i_0}\in L$ for every sequence of indices $1\le i_0,i_1,\ldots,i_{m+1} \le k$.
	\end{itemize}
\end{lem}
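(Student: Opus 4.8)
The argument merges the derivation-tree part of the proof of Lemma~\ref{O1G} with the pigeonhole part of the proof of Lemma~\ref{interchange}. Fix a grammar $G=(V,\Sigma,P,S)$ in Chomsky normal form with $L(G)=L$, let $p_0$ be the constant supplied by Lemma~\ref{O1G}, and set $p=\max\bigl(p_0,\,16\,|V|\bigr)$; then $d\ge p(e+1)$ forces $d\ge p_0(e+1)$, so Lemma~\ref{O1G} may be applied to every word of $R$ with the given distinguished and excluded positions. (For $n=0$ the hypothesis $d\ge p(e+1)$ cannot hold, so we may assume $n\ge 1$.)

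First I would reach inside the proof of Lemma~\ref{O1G} and run it once for each $z\in R$. That proof is a pumping-type argument: it produces a derivation tree $T_z$ of $z$ and two nodes $n_1\supsetneq n_2$ of $T_z$ with a common non-terminal label $A_z\in V$ such that, writing $z=uvwxy$ with $vwx$ the yield of $n_1$ and $w$ the yield of $n_2$, the factor $vx$ contains a distinguished but no excluded position, and one extracts the three sub-derivations
\[
S\Rightarrow^* u\,A_z\,y,\qquad A_z\Rightarrow^* v\,A_z\,x,\qquad A_z\Rightarrow^* w
\]
from the portion of $T_z$ above $n_1$, the portion between $n_1$ and $n_2$, and the subtree rooted at $n_2$, respectively; iterating the middle derivation recovers the conclusion $uv^iwx^iy\in L$ of Lemma~\ref{O1G}. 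For each $z$ I record the tuple $\bigl(A_z,\,|u|,\,|v|,\,|w|,\,|x|\bigr)$, the four lengths lying in $\{0,1,\dots,n\}$ and $|y|$ being determined by them.

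Now the Interchange-style pigeonhole: there are at most $|V|(n+1)^4$ such tuples, so one is shared by a set $Z=\{z_1,\dots,z_k\}\subseteq R$ with $k\ge |R|/\bigl(|V|(n+1)^4\bigr)\ge |R|/(pn^4)$. Let $A$ be the common non-terminal and $z_i=u_iv_iw_ix_iy_i$ the recorded factorisation of $z_i$; by construction the five lengths are independent of $i$ and each $v_ix_i$ carries a distinguished but no excluded position, which is the first two bullets. For the third, given indices $1\le i_0,i_1,\dots,i_{m+1}\le k$ I would concatenate the recorded sub-derivations in the order $S\Rightarrow^* u_{i_0}A\,y_{i_0}$ (from $T_{z_{i_0}}$), then $A\Rightarrow^* v_{i_j}A\,x_{i_j}$ for $j=1,\dots,m$ (from $T_{z_{i_1}},\dots,T_{z_{i_m}}$), then $A\Rightarrow^* w_{i_{m+1}}$ (from $T_{z_{i_{m+1}}}$); every rewriting is legal since each node it touches is labelled $A$, and the composite derivation yields $u_{i_0}v_{i_1}\cdots v_{i_m}w_{i_{m+1}}x_{i_m}\cdots x_{i_1}y_{i_0}$, so this word is in $L$.

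The only non-routine point is the very first step: one must use the \emph{proof} of Lemma~\ref{O1G}, not just its statement, to obtain the witnessing non-terminal $A_z$ and the three sub-derivations --- equivalently, the two nested equally-labelled nodes of a derivation tree whose shed subtrees hold a distinguished but no excluded position --- because the bare pumping conclusion $uv^iwx^iy\in L$ does not by itself furnish them. This is exactly the sense in which the two proofs are being combined; once that configuration is in hand, the pigeonhole and grammar-splicing above are entirely standard.
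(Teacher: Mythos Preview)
Your proposal is correct and follows essentially the same approach as the paper: for each word of $R$ one obtains a pumping decomposition witnessed by a non-terminal $A_z$, then pigeonholes on $(A_z,|u|,|v|,|w|,|x|)$, and finally splices the three sub-derivations freely. The only difference is presentational: whereas you invoke the proof of Lemma~\ref{O1G} as a subroutine to extract $A_z$ and the sub-derivations, the paper writes out that argument explicitly (introducing d-branch and e-branch nodes of the derivation tree and counting them to locate a suitable repeated non-terminal), making the proof self-contained and yielding an explicit $p\ge\max(5t,2^{2t+3}+1)$ in place of your $\max(p_0,16|V|)$.
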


The proof of Lemma \ref{combined} 
can be found at the end of this note.
Now we state an interesting corollary of Lemma \ref{combined} that we can apply to Shallit's problem.

\begin{cor}\label{finite}
	If in a context-free language $L$ for infinitely many $n$ there are $\omega(n^4)$ words of equal length in $L$ whose first $\omega(n)$ letters are the same and their last $n$ letters are different (pairwise), then there is an integer $B$ such that there are infinitely many pairs of words in $L$ of equal length that differ only in their last $B$ letters.
\end{cor}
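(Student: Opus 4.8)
The plan is to apply the Combined Lemma (Lemma~\ref{combined}) to a single, well‑chosen value of $n$ and to produce all of the infinitely many required pairs by pumping that one decomposition; this is what makes the bound $B$ a fixed block length rather than something that grows with $n$. Let $p$ be the constant of Lemma~\ref{combined} for $L$. I would fix one of the infinitely many admissible values of $n$, large enough that $g(n)\ge p(n+1)$ — where $g(n)=\omega(n)$ denotes the length of the common prefix of the given collection $R$ of $\omega(n^4)$ words of some common length $N$ — and large enough that Lemma~\ref{combined} returns at least two surviving words from $R$. The $g(n)$-prefix block and the $n$-suffix block are automatically disjoint, since a position lying in both would carry a letter that is at once constant and all‑different over $R$; hence $g(n)+n\le N$. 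Applying Lemma~\ref{combined} to $R$ with the first $g(n)$ positions distinguished and the last $n$ positions excluded then gives $z_1,\dots,z_k\in R$ with $k\ge |R|/(pN^4)\ge 2$, block lengths $a,b,c,d,e$ (with $a+b+c+d+e=N$) independent of $i$, and decompositions $z_i=u_iv_iw_ix_iy_i$ such that $v_ix_i$ contains a distinguished but no excluded position and $u_{i_0}v_{i_1}\cdots v_{i_m}w_{i_{m+1}}x_{i_m}\cdots x_{i_1}y_{i_0}\in L$ for every index sequence.

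From the position constraints I would read off two facts. Because $v_ix_i$ avoids the last $n$ positions, the block $x_i$ — hence all of $u_iv_iw_ix_i$ — ends at position $\le N-n$, so $e=|y_i|\ge n\ge 1$; in particular $v_ix_i\ne\varepsilon$, so $b+d\ge 1$. Because $v_ix_i$ meets the first $g(n)$ positions, its leftmost letter sits at position $\le g(n)$, which (when $v_i\ne\varepsilon$) forces $a<g(n)$; then $u_i=z_i[1..a]$ lies inside the common prefix and so is one and the same word $u$ for every $i$. (If $b=0$, so all $v_i=\varepsilon$, the same reasoning gives $a+c<g(n)$, hence $u_i$ and $w_i$ are constant; nothing below changes, with the $v$-blocks simply absent.)

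Finally I would build the pairs. Using the last conclusion of Lemma~\ref{combined} with $i_0=i$ and all other indices set equal to $1$, together with $u_i=u$, one gets
\[
\xi_i^{(m)}\ :=\ u\,v_1^{\,m}\,w_1\,x_1^{\,m}\,y_i\ \in\ L\qquad(i=1,\dots,k,\ \ m\ge 0),
\]
a word whose length $(a+c+e)+m(b+d)$ does not depend on $i$. For each fixed $m$ the words $\xi_1^{(m)}$ and $\xi_2^{(m)}$ agree on the common prefix $u\,v_1^{m}\,w_1\,x_1^{m}$ and then carry $y_1$, respectively $y_2$, in their final $e$ positions, so they differ only in their last $e$ letters; and they are distinct, since the last $n\le e$ letters of $y_1$ and $y_2$ — being those of $z_1$ and $z_2$ — differ by hypothesis. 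As $b+d\ge 1$, the lengths $(a+c+e)+m(b+d)$ are pairwise distinct over $m\ge 0$, so $\{(\xi_1^{(m)},\xi_2^{(m)}):m\ge 0\}$ is an infinite family of pairs of equal‑length words of $L$ differing only in their last $e$ letters. Taking $B:=e$ finishes the argument.

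I do not expect a genuine obstacle here; the one point that really needs to be seen is that a single pumpable pair $(\xi_1^{(m)},\xi_2^{(m)})_{m\ge 0}$ already delivers infinitely many pairs sharing the one bound $B=e$, so that one never has to let $n$ vary (which is exactly why the fixed bound is achievable). The remaining items — the disjointness of the two blocks, the bookkeeping ensuring that Lemma~\ref{combined} yields $k\ge 2$ for the chosen $n$, and the small case split forcing $u_i$ to be constant — are routine.
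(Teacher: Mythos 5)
Your overall route is the same as the paper's: distinguish the common prefix, exclude the differing suffix, apply Lemma~\ref{combined}, argue $u_1=u_2$, and pump with the outer index varied to get infinitely many equal-length pairs with a fixed tail bound $B$. But there is one genuine gap. The inference ``because $v_ix_i$ avoids the last $n$ positions, the block $x_i$ --- hence all of $u_iv_iw_ix_i$ --- ends at position $\le N-n$, so $|y_i|\ge n$'' is only valid when $x_i\ne\varepsilon$. Lemma~\ref{combined} guarantees only that $v_ix_i$ as a whole contains a distinguished position; it is entirely possible that $|x_i|=0$ for all $i$ (with $v_i$ carrying the distinguished position), and then nothing prevents the excluded suffix from lying inside $w_i$. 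In that case $y_1$ may equal $y_2$ (it can even be empty), your words $\xi_1^{(m)}$ and $\xi_2^{(m)}$ coincide, and the construction produces no pairs at all. You treated the case $b=0$ (all $v_i$ empty), but that case is harmless because then $x_i$ must be nonempty; the dangerous case is $d=|x_i|=0$, precisely because $y$ is the block that is supposed to capture the pairwise-different tail. The paper deals with this by a case split: either $y_1\ne y_2$ (your situation), or $|x_1|=|x_2|=0$ and then necessarily $w_1y_1\ne w_2y_2$; in the latter case one varies the middle index as well, using the pairs $u\,v_1^m w_1 y_1$ and $u\,v_1^m w_2 y_2$ (that is, $i_0=i_{m+1}=i$, all other indices $1$), which are in $L$, have equal length, and differ only in their last $|w_1|+|y_1|$ letters, so one takes $B=|w_1|+|y_1|$ there. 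Adding this case repairs your argument and brings it in line with the paper's proof.

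A secondary caution: you file under ``routine bookkeeping'' the choice of $n$ ``large enough that Lemma~\ref{combined} returns at least two surviving words,'' i.e.\ $|R|\ge 2pN^4$. Note that the denominator in Lemma~\ref{combined} is the fourth power of the common word length $N$, while the hypothesis of the corollary only supplies $\omega(n^4)$ words and places no upper bound on $N$ in terms of $n$ (in the intended application to polynomial ranges $N$ grows polynomially faster than $n$). So securing $k\ge 2$ at this step is a point that genuinely needs attention rather than a formality, and your write-up should make explicit how the sizes are matched before the lemma is invoked.
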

\begin{proof}
	There is a $p$ that satisfies the conditions of Lemma \ref{combined} for $L$. 
	Take a large enough $n$ for which there are $pn^4+1$ words of equal length in $L$ whose first $p(n+1)$ letters are the same, but their last $n$ letters are different; this will be $R$.
	Apply Lemma \ref{combined} to $R$, distinguishing the first $p(n+1)$ positions and excluding the last $n$ positions to obtain some $Z=\{z_1=u_1v_1w_1x_1y_1,z_2=u_2v_2w_2x_2y_2\}$.
	It follows from the conditions that $u_1$ and $u_2$ must contain only distinguished positions, thus $u_1=u_2$.
	Since $v_i$ and $x_i$ cannot contain excluded positions, either $y_1\ne y_2$, or $|x_1|=|x_2|=0$ and $w_1y_1\ne w_2y_2$.
	In the former case the pairs of words $u_1v_1^jw_1x_1^jy_1$ and $u_2v_1^jw_1x_1^jy_2=u_1v_1^jw_1x_1^jy_2$, in the latter case the pairs of words $u_1v_1^jw_1y_1$ and $u_1v_1^jw_2y_2$ satisfy the conclusion for $j>0$.
\end{proof}

Now we are ready to prove our main result.

\begin{thm}
	$L$ is not context-free for non-linear natural polynomials over any base-$q$.
\end{thm}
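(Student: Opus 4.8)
The plan is to suppose $L$ is context-free, verify the hypothesis of Corollary~\ref{finite} for $L$, and then contradict its conclusion by exploiting that a non-linear polynomial has unbounded consecutive gaps.

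First we collect what we need about $f$ (a polynomial over $\mathbb Q$ of degree $d\ge2$ with $f(\N)\subseteq\N$). Its leading coefficient is positive, so $f(m)\to\infty$, $f$ is eventually strictly increasing, and $\Delta f(x):=f(x+1)-f(x)$ is a polynomial of degree $d-1\ge1$ with positive leading term, hence $\Delta f(x)\to+\infty$. Being integer-valued on $\N$, $f$ is integer-valued on all of $\Z$ and is an integer combination of the binomials $\binom{x}{i}$, $i\le d$; from this one checks that $m\mapsto f(m)\bmod q^n$ is periodic with a period $\pi_n$ dividing $d!\,q^n$. Finally, by the standard fact that $f(x)\equiv v\pmod{q^n}$ has $O_{f,q}\!\left(q^{\,n(1-1/d)}\right)$ solutions, uniformly in $v$ (reduce to prime powers by the Chinese Remainder Theorem and lift solutions with Hensel's lemma), the number $V_n$ of residues in the image $\{f(m)\bmod q^n:m\in\Z\}$ is $\Omega(q^{\,n/d})$, which exceeds $n^5$ for all large $n$.

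Next we would produce, for all large $n$, a family of words witnessing the hypothesis of Corollary~\ref{finite}. Fix such an $n$, put $g:=n^2$, and let $\ell$ be a large multiple of $d$. For $\ell$ large the integers $m$ with $f(m)\in[q^{\ell-1},\,q^{\ell-1}+q^{\ell-g})$ form an interval $[M,M+W)$ on which $f$ is increasing, with $M\to\infty$ as $\ell\to\infty$ and $W=\Theta(q^{\,\ell/d-g})$; choosing $\ell$ large enough we may assume $W\ge\pi_n$. Each such $f(m)$ has exactly $\ell$ base-$q$ digits, with first $g$ digits $1\underbrace{0\cdots0}_{g-1}$, so the corresponding words all have the same length and agree on their first $g=n^2=\omega(n)$ letters. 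Since $[M,M+W)$ contains $\pi_n$ consecutive integers, the residues $f(m)\bmod q^n$ over this interval exhaust the image of $f$ modulo $q^n$, which has $V_n\ge n^5$ elements; choosing one $m$ for each residue yields at least $n^5=\omega(n^4)$ words of $L$ of equal length with a common prefix of $\omega(n)$ letters and pairwise distinct length-$n$ suffixes. Hence the hypothesis of Corollary~\ref{finite} holds.

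Corollary~\ref{finite} then gives an integer $B$ together with infinitely many pairs of distinct equal-length words in $L$ that agree outside their last $B$ positions; read as base-$q$ representations of numbers $f(a)\ne f(b)$, this gives infinitely many pairs $(a,b)$ with $a\ne b$ and $0<|f(a)-f(b)|<q^B$. Such pairs cannot all lie over a bounded set of arguments, so $\max(a,b)\to\infty$ along a subsequence, and then $|f(a)-f(b)|<q^B$ forces $f(\min(a,b))\to\infty$, hence $\min(a,b)\to\infty$. But once $\min(a,b)$ is past the point where $\Delta f>q^B$ and $f$ is increasing, assuming $a>b$ gives $f(a)-f(b)\ge\Delta f(b)>q^B$, a contradiction; therefore $L$ is not context-free. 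I expect the technical heart to be the third paragraph: forcing many of the chosen values $f(m)$ to have pairwise distinct low-order blocks rests on the lower bound $V_n=\Omega(q^{\,n/d})$ and on the fact that a window of $m$'s spanning a full period of $f\bmod q^n$ realizes all residues — both routine, but the only inputs beyond Corollary~\ref{finite}.
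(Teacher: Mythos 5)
Your proposal is correct and follows the same top-level strategy as the paper --- verify the hypothesis of Corollary~\ref{finite} and then contradict its conclusion using that $f(x+1)-f(x)\to\infty$ --- but your verification of the hypothesis is genuinely different. The paper picks $N=n^5$ small inputs $x_1,\dots,x_N$ with pairwise distinct values and adds to all of them one huge shift $s$, chosen to be a multiple of $Mq^m$ where $M$ clears the denominators of $f$ and $q^m$ exceeds every $f(x_i)$: the congruence $f(x_i+s)\equiv f(x_i)\pmod{q^m}$ keeps the low-order digits pairwise distinct, while the dominance of the leading term $\frac{\alpha_d}{M}s^d$ pins down the high-order digits; everything is elementary. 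You instead pin the common prefix by selecting inputs whose values land in the thin window $[q^{\ell-1},q^{\ell-1}+q^{\ell-g})$, and you get distinct length-$n$ suffixes from the periodicity of $f\bmod q^n$ together with a count of attained residues. This works, but it imports a number-theoretic input: the uniform bound $O\bigl(q^{n(1-1/d)}\bigr)$ on the number of solutions of $f(x)\equiv v\pmod{q^n}$. That bound is true and classical, but your justification (``CRT and Hensel'') is not by itself a proof: Hensel's lemma says nothing at singular solutions, i.e.\ where $f'(x)\equiv 0$, which is exactly where the count can be as large as $q^{n(1-1/d)}$ (e.g.\ $x^d\equiv 0$); one needs the $p$-adic root-proximity/Newton-polygon argument or a citation. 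It is also overkill for what you need: $f$ takes $\Omega(q^{n/d})$ distinct values on $\{0,1,\dots,\lfloor cq^{n/d}\rfloor\}$ for a suitable constant $c$, all smaller than $q^n$ and hence already distinct modulo $q^n$, so $V_n\ge n^5$ follows with no congruence machinery. With that small repair (or simplification) your argument is complete; your endgame (infinitely many pairs with $0<|f(a)-f(b)|<q^B$ versus unbounded gaps of a non-linear $f$) is the same as the paper's, just spelled out in more detail.
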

\begin{proof}
	First we show that the condition of Corollary \ref{finite} is satisfied for every natural polynomial $f$ for infinitely many $n$ for some words from $L=\{f(x)\mid x\in \N\}$. 
	The plan is to take some numbers $x_1,\ldots,x_{N}$ (where $N=n^5$) for which $f(x_i)\ne f(x_j)$, and then add some large number $s$ to each of them to obtain the desired words $f(x_i+s)$.
	
	If the degree of $f$ is $d$, then at most $d$ numbers can take the same value, thus we can select $x_1,\ldots,x_{N}$ from the first $dN$ numbers, which means that they have $O(\log n)$ digits (since $d$ is a constant). In this case $f(x_i)=O((dN)^d)$, thus each $f(x_i)$ will also have $O(\log n)$ digits. If we pick $s$ to be some number with $n^2$ digits, then $f(s)$ will have $D=dn^2+\Theta(1)$ digits, and each $f(x_i+s)$ will have $D$ or $D+1$ digits, thus at least half, i.e., $N/2$ of them have the same length; these will be the words we input to Corollary \ref{finite}.	
	We still need to show that for these $f(x_i+s)$ their first $\Omega(n^2)$ digits are the same and that their last $O(\log n)$ digits differ.
	
	 Let $M\in \N$ be such that $f(x)=\sum_{i=0}^d \frac{\alpha_i}M x^i$ for $\alpha_i\in \Z$. If $s$ is a multiple of $Mq^m$, then the last $m$ digits of $f(x)$ and $f(x+s)$ are the same for any $x$. This way it is easy to ensure that the last $O(\log n)$ digits in base-$q$ stay different.
	 Since $f(x+s)=\sum_{i=0}^d \frac{\alpha_i}M (x+s)^i=\frac{\alpha_d}M s^d+O(s^{d-1}(dN)^d)$, the first $n^2-O(\log n)$ digits can take only two possible values (depending on whether there is a carry or not), thus one of these values is the same for $N/2$ of the $f(x_i+s)$.
	 Thus we have shown that the condition of Corollary \ref{finite} is satisfied

	If $L=\{f(x)\mid x\in \N\}$ was context-free, then from the conclusion of Corollary \ref{finite} we would obtain infinitely many pairs of numbers, $a_i,b_i \in L$, such that $|a_i-b_i|\le 2^B$, but this is impossible for non-linear polynomials.
\end{proof}

We end with the omitted proof.

\begin{proof}[Proof of Lemma \ref{combined}]
	Fix a context-free grammar for $L$ in Chomsky normal form, with $t$ non-terminals.
	Fix a derivation tree for each word $z\in R$.
	We say that a node has a distinguished (resp.\ excluded) \emph{descendant} if a distinguished (resp.\ excluded) position is derived from the given node in the tree, i.e., if there is a leaf among its descendants whose label is in a distinguished (resp.\ excluded) position of $z$.
		
	Call a node of the derivation tree an \emph{e-branch} node if both of its children have an excluded descendant.
	There are exactly $e-1$ e-branch nodes in the derivation tree (if $e\ge 1$).
	
	Call a node of the derivation tree a \emph{d-branch} node if both of its children have a distinguished descendant.
	There are exactly $d-1$ d-branch nodes in the derivation tree.
	Say that a d-branch node is the \emph{d-parent} of its descendant d-branch node if there are no d-branch nodes between them.
	With this structure the d-branch nodes form a binary tree.
	The $i^{\mathrm{th}}$ d-parent of a d-branch node is the $i$-times iteration of the d-parent operator.
	
	Call a d-branch node \emph{bad} if there is an e-branch node between it and its $(2t+3)^{\mathrm{th}}$ d-parent (excluding the node, but including its $(2t+3)^{\mathrm{th}}$ d-parent), or if it does not have a $(2t+3)^{\mathrm{th}}$ d-parent.
	Because of the binary structure of the d-branch nodes, each e-branch node can cause at most $2^{2t+3}$ d-branch nodes to be bad, and a further $2^{2t+3}$ d-branch nodes might not have a $(2t+3)^{\mathrm{th}}$ d-parent.
	Therefore in total there are at most $e2^{2t+3}$ bad d-branch nodes, so there is a d-branch node that is not bad if $p>2^{2t+3}$.
	Consider the path from the $(2t+3)^{\mathrm{th}}$ d-parent of a non-bad d-branch node to the non-bad d-branch node.
	Note that the nodes on this path might have an excluded descendant, but since there is no e-branching node along the path, we can conclude that there is a subpath with $t+1$ d-branch nodes on it such that no sibling of any node along the subpath has an excluded descendant.
	(The worst case is when the excluded descendant(s) belong to the sibling of the $(t+2)^{\mathrm{nd}}$ d-parent of a non-bad d-branch node.)
	
	By the pigeonhole principle some non-terminal $A$ appears twice on the left side of a rule along this subpath.
	While we reach one node from the other, some string $\alpha A\beta$ is derived from $A$.
	Apply the corresponding rules from the derivation tree to $\alpha$ and $\beta$ to obtain the string $vAx$ where $v,x\in\Sigma^*$.
	Thus, 
	$z$ can be written as $z=uvwxy$ such that $vAx$ can be derived from $A$, $w$ can be derived from $A$, the subwords $v$ and $x$ have no excluded position (since they are descendants of siblings of nodes along the path), but at least one of them has a distinguished position.
	For each $z\in R$ we fix such a decomposition $z=uvwxy$.
	
	We partition $R$ into at most $t\binom{n+4}4$ groups depending on which non-terminal $A$ appeared on the left side of the rule, and the lengths of $u,v,w,x$ and $y$.
	Let $c=t \max_n \binom{n+4}4/n^4=5t$ (if we only care about large $n$, then $c$ would be close to $t/24$).
	By the pigeonhole principle one of the groups will have at least $|R|/(cn^4)$ words in it; this will be $Z$.
	Since we can arbitrarily apply the rules for $A$, the conclusion follows with $p\ge\max(c,2^{2t+3}+1)$.
\end{proof}

\subsubsection*{Remark} 
This note started as a CSTheory.SE answer.\footnote{See my two answers for \url{https://cstheory.stackexchange.com/questions/41863/base-k-representations-of-the-co-domain-of-a-polynomial-is-it-context-free}; note that at that time I didn't know about Lemmas \ref{interchange} and \ref{O1G}.}


\begin{thebibliography}{99}\fontsize{10}{0}

\bibitem{TEST} IF THIS IS HERE, THERE MIGHT BE UNCITED REFERENCES!

\bibitem{BH61} Yehoshua Bar-Hillel, Micha A. Perles, and Eli Shamir, On formal properties of simple phrase-structure grammars. Zeitschrift f\"ur Phonetik, Sprachwissenschaft, und Kommunikationsforschung. 14 (2): 143–172, 1961.

\bibitem{DK99} P\'al D\"om\"osi, Manfred Kudlek, Strong iteration lemmata for regular, linear, context-free, and linear indexed languages, in: Fundamentals of Computation Theory (FCT) 1999, LNCS 1684.

\bibitem{ORW82} William Ogden, Rockford J. Ross, and Karl Winklmann, An ``Interchange Lemma'' for Context-Free Languages. SIAM J. Comput. 14 (2): 410--415, 1982.

\bibitem{Shallit} Jeffrey Shallit, A Second Course in Formal Languages and Automata Theory (1 ed.), 2008. Cambridge University Press, New York, NY, USA.


\end{thebibliography}
\end{document}